\newtheorem{theorem}{Lemma}
\newtheorem{rem}{Remark}
\begin{document}

\title{A New Multiple Access Technique for 5G:  Power Domain Sparse Code Multiple Access (PSMA)}

\author{Nader Mokari, Mohammad.R Javan, Mohammad Moltafet, Hamid Saeedi, and Hossein Pishro-Nik
\thanks{N. Mokari, M. Moltafet, and H. Saeedi are with ECE department, Tarbiat Modares University, Tehrn, Iran. M.R. Javan
 is with School of Electrical and Robotic Engineering, Sharood University of Technology, Shahrood, Iran. H. Pishro-Nik is with ECE department, University of Massachusetts, 
 Amherst, MA, USA.}}

\maketitle

\begin{abstract}
In this paper, a new approach for multiple access (MA) in fifth generation (5G) of cellular networks called  power domain sparse code multiple access (PSMA)  is proposed. In PSMA, we adopt both  the power domain and the code domain to transmit multiple users' signals over a subcarrier simultaneously. In such a model, the same SCMA codebook can be used by multiple users where, for these users, power domain non-orthogonal multiple access (PD-NOMA) technique is used to send  signals non-orthogonally. Although different SCMA codebooks are orthogonal and produce no interference over each other, the same codebook used by multiple users produces interference over these users. We investigate the signal model as well as the receiver and transmitter of the PSMA method. To evaluate the performance of PSMA, we consider a heterogeneous cellular network (HetNet). In this case our design objective is to maximize the system sum rate of the network subject to some system level and QoS constraints such as transmit power constraints. We formulate the proposed resource allocation problem as an optimization problem and solve it by successive convex approximation (SCA) techniques. Moreover, we  compare PSMA with sparse code multiple access (SCMA) and PD-NOMA from the performance and computational complexity perspective. Finally, the effectiveness of the proposed approach is investigated using numerical results. We show that by a reasonable increase in complexity, PSMA can improve the spectral efficiency about 50\% compared to SCMA and PD-NOMA.
\newline
\emph{Index Terms--} Non-orthogonal multiple access (NOMA), sparse code multiple access SCMA, power domain coded sparse code multiple access (PSMA), resource allocation, successive convex approximation.
\end{abstract}

\section{introduction}
Multiple access (MA) techniques have an essential role on the performance improvement of  cellular networks. In the fourth generation (4G) of cellular networks, orthogonal frequency division  multiple access (OFDMA) is proposed as an efficient MA technique to address  the upcoming challenges. As the statistical data shows, mobile data traffic will grow  several folds in the next decade \cite{int1,int2}. Therefore, the next generations of cellular networks should be designed to address  existing challenges like spectral efficiency (SE) and energy efficiency (EE).

Due to the  demand for high data rate services and the limitations of the available bandwidth for cellular networks, applying new techniques and methods to improve SE in 5G is very important. For MA techniques in 5G, some non-orthogonal techniques such as power domain non-orthogonal multiple access (PD-NOMA) \cite{Saito} and sparse code multiple access (SCMA) \cite{NIC} are proposed. By applying  superimposed coding on the transmitter side, PD-NOMA assigns a subcarrier to multiple users simultaneously, while on the receiver side, by using successive interference cancellation (SIC) method, the signals of users are detected. SCMA is a codebook based MA technique in which each subcarrier can be used in different codebooks on the transmitter side, and on the receiver side, users' signals are detected by applying message passing approach (MPA).

 Recently, PD-NOMA and SCMA have received significant attention as appropriate candidates for MA technique for 5G \cite{pp0}-\cite{int8}.
 In \cite{pp0}, the authors study
 user pairing in a PD-NOMA based system. They show that the system throughput can be improved by pairing
 users enjoying good channel situations with users suffering
 from poor channel conditions.
 The authors of \cite{pp1} study the joint power allocation and precoding
 design in a multiuser multiple input multiple output (MIMO) PD-NOMA based system in order to maximize the system sum rate.
 In \cite{pp8}, by considering a minimum data rate requirement for the users with bad channel situations, the authors propose a power allocation problem. To solve the corresponding problem, they apply  two algorithms:  an optimal algorithm using the bisection search method with high computational complexity and a suboptimal one based on the SIC approach.
 In \cite{pp9}, the authors propose different resource allocation problems. Based on the proposed method, they study the effect of the fairness among  users in a PD-NOMA based system. The authors of \cite{pp10} propose a resource allocation method which maximizes the energy efficiency in an SCMA-based system. In \cite{our}, PD-NOMA and SCMA as the pioneer candidates of MA in 5G are compared from performance and receiver complexity points of view.

In the core of PD-NOMA lies the SIC technique which makes it possible to allocate one subcarrier to more than one user by removing the interference resulting from non-exclusive utilization of subcarriers. On the other hand, SCMA employs exclusive codebooks for each user where each codebook is assigned non-exclusively to certain number of subcarriers. Instead of SIC, SCMA uses MPA to remove the resulting interference. The core idea of PSMA is to take advantage of SCMA ability to remove the inference using MPA while letting the users to utilize the codebooks non-exclusively, i.e., each codebook is assigned to more than one user. The SIC method is then applied to remove the interference resulting from non-exclusive use of the codebooks after applying the MPA.

The main contributions of this paper can be summarized as follows:
\begin{itemize}
\item
We investigate the signal model and detection techniques of PSMA together with PD-NOMA and SCMA for comparison.
\item
We compare the receiver complexity for the three techniques.
\item 
We propose a novel resource allocation problem for PSMA-based heterogeneous networks (HetNets) in which we maximizes the sum rate with certain  constraints. To solve the corresponding problem, we apply advanced techniques to address the non-convexity of the problem. 
\item
For practical scenarios, we consider similar resource allocation problems for PD-NOMA and SCMA to compare their performance and complexity with that of PSMA.
\end{itemize}

The results indicate that by a reasonable increase in the receiver complexity, PSMA can improve the spectral efficiency as far as 50\% compared to SCMA and PD-NOMA which is, in our opinion, a great achievement.

The remainder of this paper is organized as follows:  system model and  signal model for transmitter and receiver of PD-NOMA and SCMA are studied in Section \ref{Multiple Access Technique: PD-NOMA} and Section \ref{Multiple Access Technique: SCMA}, respectively. The PSMA technique and its  signaling on transmitter and receiver sides  are investigated in  Section \ref{Multiple Access Technique: PSMA}.
The complexity of the receiver for different MA techniques  is studied  in Section \ref{Receiver Complexity}. To evaluate the performance of  PSMA,
 resource allocation problems are proposed in Section
\ref{Optimization Problems for PSMA based Cellular Networks}. In Section \ref{solution of the joint power and subcarrier allocation}, the solution algorithms of the proposed resource allocation problems are presented. Numerical results are presented in Section \ref{simulation resuls}. Finally, the paper is concluded  in Section \ref{CONCLUSION}.

\section{Multiple Access Technique: PD-NOMA}\label{Multiple Access Technique: PD-NOMA}
In a  PD-NOMA system, each subcarrier can be assigned to multiple users  simultaneously by applying superimposed coding (SC), and each user removes the signals of other users by exploiting SIC. Based on the PD-NOMA approach, each user   on the receiver side   removes the signals of the users with worse channel, and considers the signals of other users  as noise.   
\begin{figure*}[t]
\centering
\includegraphics[width=1\textwidth]{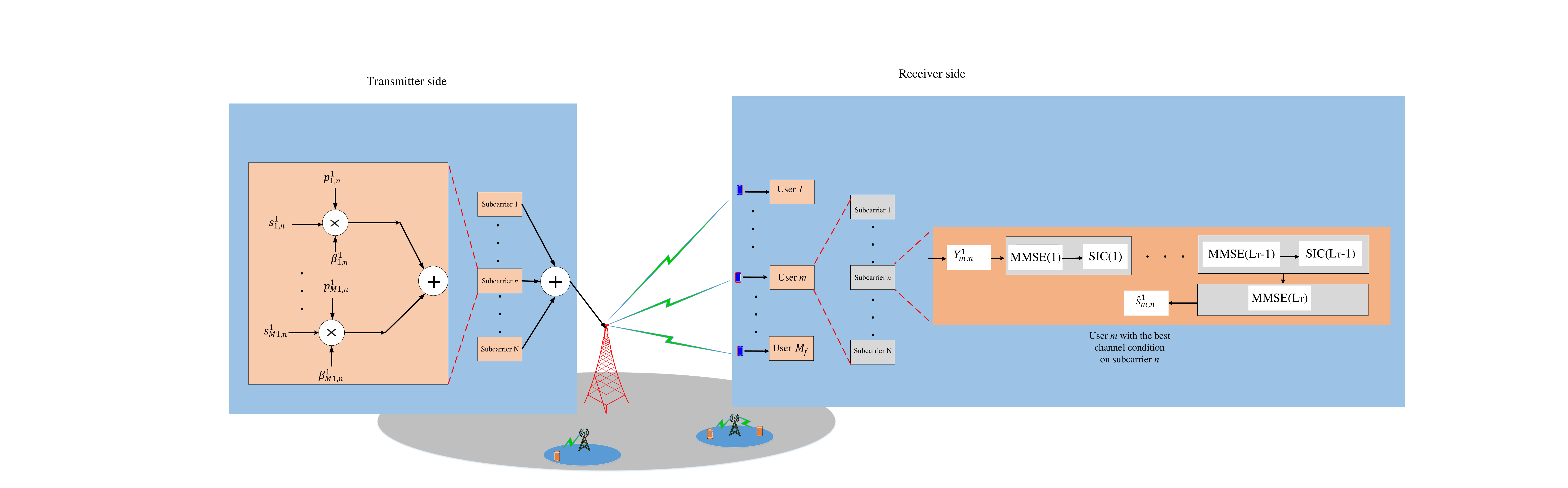}
\caption{Block diagram of transmitter and receiver in a PD-NOMA-based system.}
\label{NOMA}
\end{figure*}
\subsection{PD-NOMA System Model}
We consider a downlink PD-NOMA-based HetNet with $F$ BSs, $M$ users, and $N$ subcarriers. In this system model,
$\mathcal{F}$ indicates the set of BSs with $f=1$ denoting the macro base station  (MBS). $\mathcal{M}_f$ denotes the set of  users of cell $f$ with $\bigcup_{f\in\mathcal{F}}\mathcal{M}_f=\mathcal{M}$ denoting the set of all users. $\mathcal{N}$ demonstrates the set of subcarriers. In addition, $h_{m,n}^f$ is the channel coefficient between user $m$ and BS $f$ on subcarrier $n$ and $p_{m,n}^f$ is the transmit power of BS $f$ to user $m$ over subcarrier $n$. We define the subcarrier assignment indicator  $\beta^f_{m,n}$ with $\beta^f_{m,n}=1$  if subcarrier $n$ is assigned to user $m$ over BS $f$, and otherwise $\beta^f_{m,n}=0$.

\subsection{PD-NOMA Signal Model: Received Signal}
With parameter definitions in the previous subsection, the received signal at user $m$ in  BS $f$ over subcarrier $n$ is given by:
\begin{align}\label{PDsignal}
&Y_{m,n}^f=h^f_{m,n}\sqrt{p^f_{m,n}}s^f_{m,n}+\sum_{j\in\mathcal{M}_f,j\neq m}\beta^f_{j,n} h^f_{m,n}\sqrt{p^f_{j,n}}s^f_{j,n}\\\nonumber&+\sum_{i\in\mathcal{F}, i\ne f}\sum_{j\in\mathcal{M}_i}\beta^i_{j,n} h^i_{m,n}\sqrt{p^i_{j,n}}s^i_{j,n}+w_{m,n}^f,
\end{align}
where $s^i_{j,n}$ indicates the message which BS $i$ sends to user $j$ on subcarrier $n$ and $w_{n,m}^f$ shows the noise. The first term of \eqref{PDsignal} indicates the signal of user $m$ on subcarrier $n$ over BS $f$, the second term  presents the NOMA interference term, and the third term shows the inter-cell interference.

\subsection{PD-NOMA Signal: Receiver Side}
In the PD-NOMA approach, on  the receiver side for user $m$, the signal of users with better channel situation  are considered as noise and the signals of other users are detected and removed. For example, if we consider a worse user $m'$, i.e., $|h^f_{m,n}|^2>|h^f_{m',n}|^2$, the signal to interference plus noise ratio (SINR) of user $m'$ at user $m$, i.e., $\gamma_{m',n}^f(m)$, can be obtained by \eqref{SINRmmprinpdnoma} shown on the next page.
\begin{figure*}[t]
\begin{align}\label{SINRmmprinpdnoma}
&\gamma_{m',n}^f(m)=\frac{|h^f_{m,n}|^2 p^f_{m',n}}{\sum_{j\in \mathcal{M}_f, |h^f_{j,n}|^2\ge|h^f_{m',n}|^2}|h^f_{m,n}|^2 p^f_{j,n}+\sum_{i\in\mathcal{F}, i\ne f}\sum_{j\in\mathcal{M}_f}|h^i_{m,n}|^2 p^i_{j,n}+|w_{m,n}^f|^2}.
\end{align}
\end{figure*}
An important issue in this procedure is  that we assume user $m$ is able to decode the message of user $m'$. This is possible if the SINR of user $m'$ at user $m$ is larger than that of user $m'$ at its receiver, i.e., $\gamma_{m',n}^f(m)\geq \beta_{m,n}^f\gamma_{m',n}^f(m')$. In single cell PD-NOMA, the strategy is to sort users based on their channel gains and let users with better channels decode the signals of users with worse channels. This strategy ensures that the condition $\gamma_{m',n}^f(m)\geq \beta_{m,n}^f\gamma_{m',n}^f(m')$ for single cell PD-NOMA is always satisfied. However, for multicell PD-NOMA. we must adopt appropriate policy to guarantee this condition.

After decoding the signal of user $m'$, user $m$ will decode it and subtract it from the received signal. User $m$ will apply this SIC procedure for all users worse than itself to finally obtain the following signal:
\begin{align}
&\hat{Y}_{m,n}^f=h^f_{m,n}\sqrt{p^f_{m,n}}s^f_{m,n}\\\nonumber&+\sum_{j\in \mathcal{M}_f, |h^f_{j,n}|^2\ge|h^f_{m,n}|^2}\beta^f_{j,n} h^f_{m,n}\sqrt{p^f_{j,n}}s^f_{j,n}\\\nonumber&+\sum_{i\in\mathcal{F}, i\ne f}\sum_{j\in\mathcal{M}_i}\beta^i_{j,n} h^i_{m,n}\sqrt{p^i_{j,n}}s^i_{j,n}+w_{m,n}^f,
\end{align}
which leads to SINR $\gamma_{m,n}^f(m)$ which can be obtained from \eqref{SINRmmprinpdnoma} using the corresponding parameters. The transmitter and receiver block diagram  of a PD-NOMA-based system with minimum mean square error (MMSE) detector  is shown in Fig. \ref{NOMA}, in which $L_T$ shows the number of users which are superimposed in each subcarrier.

\section{Multiple Access Technique: SCMA}\label{Multiple Access Technique: SCMA}
An SCMA encoder is  a mapping from $\log_2 (J)$ bits to an
$N$-dimensional  codebook of size $J$ \cite{NIC}. The
N-dimensional  codewords of a codebook are sparse vectors with $U$
($U < N$) non-zero entries that refer to $U$ specific
subcarriers. Based on the SCMA approach, codebooks that are
composed of subcarriers are the basic resource units in networks
\cite{NIC,int8}. If each
codebook consists of $U$ subcarriers, there are
$C(N,U)=\dfrac{N!}{(N-U)!U!}$ codebooks in the considered system. The transmitter and receiver block diagrams of an SCMA-based system are shown in Fig. \ref{SCMA}.
\begin{figure*}[t]
\centering
\includegraphics[width=.8\textwidth]{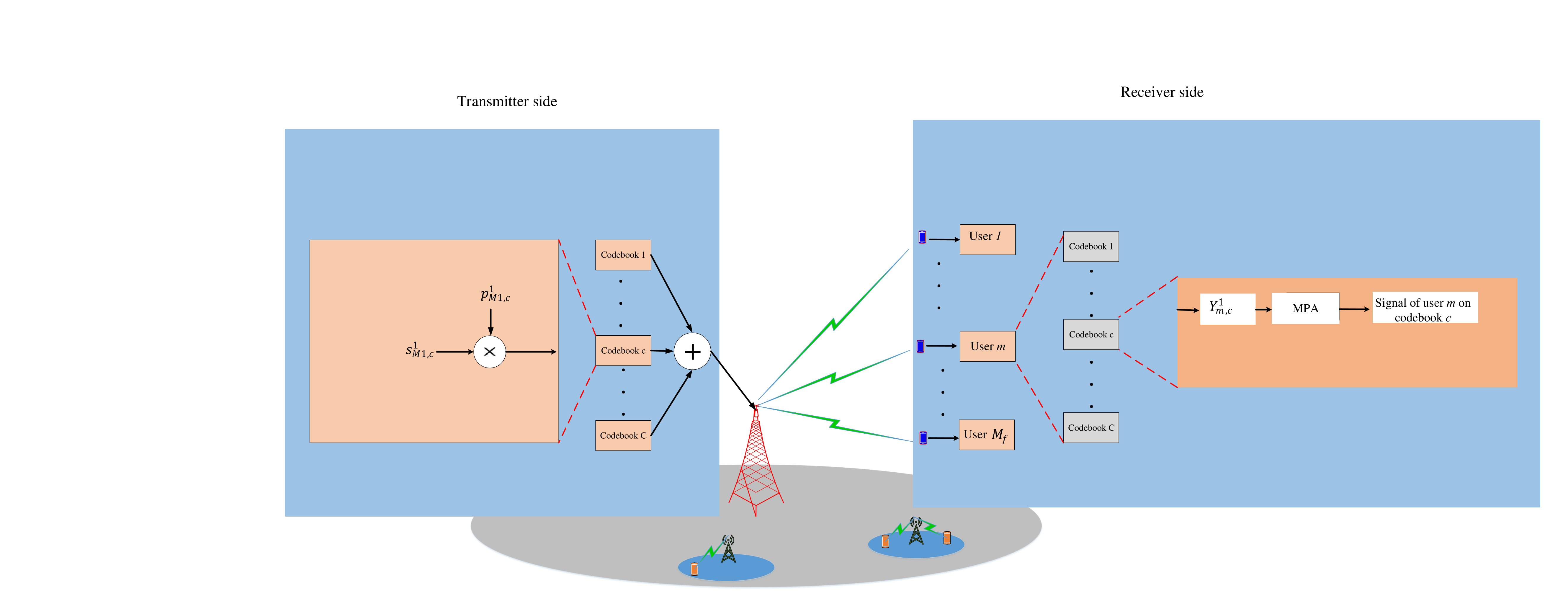}
\caption{Block diagram of transmitter and receiver in a SCMA-based system.}
\label{SCMA}
\end{figure*}

\subsection{SCMA System Model}
In SCMA, the BS sends the signals of all users simultaneously. Since users use different codes that occupy different subcarriers, the received signal over each subcarrier has components from signals of other users' codes. Note that in the considered HetNet model, users in each cell are assumed to use different codes meaning that a code is reused in each cell only once. We consider a downlink SCMA-based HetNet with $F$ BSs, $M$ users, and $C$ codebooks. In this system model, $\mathcal{C}$ indicates the set of codebooks. In addition,  $p_{m,c}^f$ is the transmit power of BS $f$ to user $m$ on codebook $c$, and $q^f_{m,c}$ is the codebook assignment between user
$m$ and codebook $c$  in BS $f$ with $q^f_{m,c}=1$ if codebook $c$
is allocated to user $m$  in BS $f$ and otherwise $q^f_{m,c}=0$, and $\rho^f_{n,c}$  is the mapping between subcarriers and codebooks with $\rho^f_{n,c}=1$ if codebook $c$ consists of subcarrier $n$ in BS $f$ and otherwise $\rho^f_{n,c}=0$. We assume that $\boldsymbol{\rho}$ is a known parameter and  $p^f_{m,c}$ is assigned to subcarrier $n$ in codebook $c$ based on  $\eta^f_{n,c}$ with $0\le\eta^f_{n,c}\le 1$. This is determined based on the codebook design and satisfies $\sum_{\forall n \in \mathcal{N}_c}\eta^f_{n,c}=1,\,\,\forall c$. Note that $\mathcal{N}_c$ shows the subcarriers set of codebook $c$. Since we assume that each codebook is used only once  in each cell, we must have $\sum_{m\in\mathcal{M}_f}q^f_{m,c}=1,\,\,\forall c,f$.

\subsection{SCMA Signal Model: Received Signal}
Considering the parameters that are defined in the system model subsection, in an SCMA based HetNet system, the received signal at user $m$ in BS $f$ over subcarrier  $n$ is formulated as:
\begin{align}\label{SCMAsignal}
&Y_{m,n}^f=\sum_{i\in\mathcal{F}}\sum_{j\in\mathcal{M}_i}\sum_{c\in\mathcal{C}}q^i_{j,c} h^i_{m,n}\rho^i_{n,c}\eta^i_{n,c} \sqrt{p^i_{j,c}}s^i_{n,c}+w_{m,n}^f,
\end{align}
where $s^i_{n,c}$ is the message which BS $f$ sends on subcarrier $n$ in codebook $c$ and $w_{m,n}^f$ indicates the noise that user $m$ experiences on subcarrier  $n$.

 We note that over each subcarrier, each user receives the signals of all other users which use codebooks that contain subcarrier $n$.

\subsection{SCMA Signal Model: Receiver Side}
At the receiver side of user $m$ that uses code $c$, the MPA is run. The algorithm naturally cancels the interference from all other codes different from code $c$. Note that since code $c$ is reused, we have interference from code $c$ used in other cells over the considered user $m$ in cell $f$. With parameters defined in the system model subsection, in an SCMA based HetNet system, the SINR of user $m$ on codebook  $c$ over BS $f$ is given by \eqref{SCMAmulticellSINRmc} shown at the top of the next page. 
\begin{figure*}
\begin{align}\label{SCMAmulticellSINRmc}
&\gamma_{m,c}^f=\dfrac{q_{m,c}^fp_{m,c}^f\sum_{n\in \mathcal{N}}\eta_{n,c}^f\rho_{n,c}^f|h_{m,n}^f|^2}{\sum_{f'\in\mathcal{F}/\{f\}}\sum_{m'\in \mathcal{M}_{f'}}\sum_{n\in \mathcal{N}} q^{f'}_{m',c}p^{f'}_{m',c}\rho_{n,c}^{f'}\eta^{f'}_{n,c}|h^{f'}_{m,n}|^2+|w^f_{m,c}|^2}.
\end{align}
\end{figure*}

\section{Multiple Access Technique: PSMA}\label{Multiple Access Technique: PSMA}
In the proposed PSMA system, we assume that each codebook can be assigned to more than one user in a BS simultaneously. Based on this new approach, each codebook  is  assigned to more than one user by applying the SC method. On the receiver side, the users' signals are detected by using MPA and SIC. Based on the PSMA approach, each user can detect and remove the signals of users with worse average channel gain by applying MPA and SIC while considering  the signals of the users with better average channel gain as noise.
The transmitter and receiver block diagrams of the proposed PSMA-based system are shown in Fig. \ref{PSMA}. 
\begin{figure*}[t]
\centering
\includegraphics[width=1\textwidth]{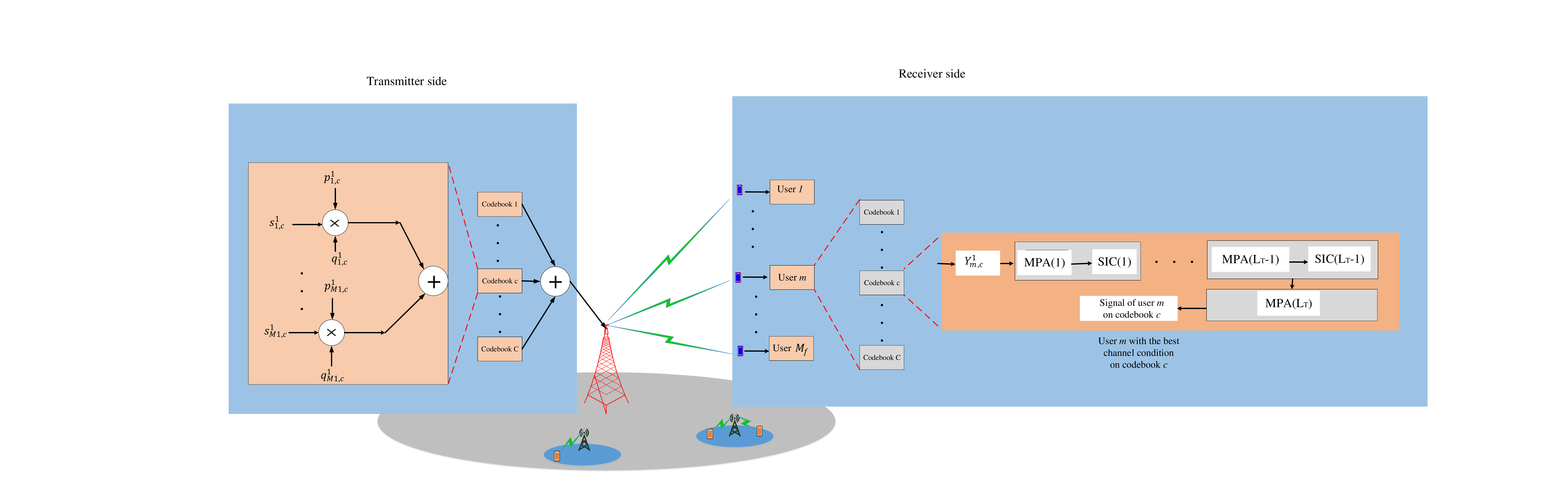}
\caption{Block diagram of transmitter and receiver in a PSMA-based system.}
\label{PSMA}
\end{figure*}

\subsection{PSMA System Model}

\subsubsection{PSMA-based single cell system model}
We consider  a downlink PSMA-based system with one BS, $M$ users, $N$ subcarriers, and $C$ codebooks. In this system model,  $h_{m,n}$ is the channel coefficient between user $m$ and BS on subcarrier $n$, $p_{m,c}$ is the transmit power of BS  to user $m$ on codebook $c$, and $q_{m,c}$ demonstrates the codebook assignment between user
$m$ and codebook $c$  with $q_{m,c}=1$ if codebook $c$
is allocated to user $m$ and otherwise $q_{m,c}=0$. In addition, $\rho_{n,c}$ indicates the mapping between subcarriers and codebooks with $\rho_{n,c}=1$ if codebook $c$ consists of subcarrier $n$, otherwise $\rho_{n,c}=0$. We assume that $\boldsymbol{\rho}$ is a known parameter.  Also, we assume that  $p_{m,c}$  is allocated to subcarrier $n$ in codebook $c$ based on  $\eta_{n,c}$ where $0\le\eta_{n,c}\le 1$ is determined based on the codebook design and satisfies $\sum_{\forall n \in \mathcal{N}_c}\eta_{n,c}=1\,\,\forall c$ \cite{NIC,int8}.

Note that in contrast to the SCMA model, a code can be reused by more than one user in the cell, i.e., for a code $c$, we may have $\sum_{m\in\mathcal{M}_f}q^f_{m,c}\geq 1$.

\subsubsection{PSMA-Based HetNet System Model}
We consider a downlink PSMA-based HetNet with $F$ BSs, $M$ users, and $C$ codebooks.  Here, we again emphasize that in contrast to the SCMA model, a code can be reused by more than one user in each cell.

\subsection{PSMA Signal Model: Received Signal}
In PSMA, a codebook can be reused more than once in each cell. The  received signal at user $m$ over subcarrier $n$ in cell $f$ (for a general case of multicell scenarios) is given by:
\begin{align}\label{PSMAsignalreceivedcell}
&Y_{m,n}^f= \sum_{f'\in\mathcal{F}}\sum_{m'\in\mathcal{M}_f}\sum_{c\in\mathcal{C}}q^{f'}_{m',c} h^{f'}_{m,n}\rho_{n,c}^{f'}\eta^{f'}_{n,c}\sqrt{p^f_{m',c}}s^{f'}_{n,c}\\&\nonumber+w_{m,n}^f,
\end{align}
where $s^i_{m,k}$ shows the message which is sent on subcarrier $k$ in codebook $c$.

 In SCMA, we assume the other signals coming from the different BSs  as noise. In PSMA, however, in addition to the different BSs signals, we accept some controllable interference to improve the system sum rate.  In SCMA, there is the constraint $\sum_{m\in \mathcal{M}_f}q^f_{m,c}\le 1,\,\,\,\forall c\in\mathcal{C},f\in\mathcal{F}$ which indicates that  in each BS, a codebook can be assigned to at most one user. However, in PSMA we have $\sum_{m\in \mathcal{M}_f}q^f_{m,c}\le L_T,\,\,\,\forall c\in\mathcal{C},f\in\mathcal{F}$  which shows  that  in each BS a codebook can be assigned to $L_T$ users simultaneously.

\subsection{PSMA Signal Model: Receiver Side}
Since the PSMA takes advantage of both PD-NOMA and SCMA techniques, the receiver performs the detection techniques of these schemes in an appropriate way. In the PSMA method, on the transmitter  side,  users are sorted based on certain criteria. On the receiver side, a user detects the signals of all other users that are worse than itself according to this criteria and removes them from the received signal using SIC. This user treats the signal of all users better than itself as noise. Note that in PSMA, to detect the signals of users, the message passing algorithm is run. Therefore, users with different codebooks do not interfere with each other and only users with the same codebook produce interference over each other. Therefore, by sorting users we mean sorting users that are using the same codebook. In this paper, the criterion for sorting users is their average channel gain in a codebook and given by
\begin{align}\label{hhatpsmareceiverside}
\hat{h}_{m,c}=\dfrac{\sum_{n=1}^{N} \rho_{n,c}|h_{m,n}|^2}{\sum_{n=1}^{N} \rho_{n,c}},
\end{align}
where we note that for each code $c$, we have $\sum_{n=1}^{N} \rho_{n,c}=U$. Also note that for two users $m$ and $m'$ using the same codebook $c$, we say user $m$ is better than user $m'$ if  we have $\hat{h}_{m,c}\geq\hat{h}_{m',c}$.

On the receiver side, user $m$, which is using codebook $c$, tries to decode the signals of all users using the same codebook $c$ with worse average channel gains. Assume  we have a single cell PSMA network and that user $m$ is better than user $m'$. Assuming that user $m$ has decoded the signals of all users worse than user $m'$ and subtracted them from the received signal, the available signal is given by 
\begin{align}
&Y_{m,c}=\sum_{n\in\mathcal{N}} h_{m,n}\rho_{n,c}\eta_{n,c} \sqrt{p_{m',c}}s_{n,c}\\\nonumber&+     \sum_{j\in \mathcal{M},|\hat{h}_{m,c}|^2\ge|\hat{h}_{m',c}|^2}\sum_{n\in\mathcal{N}}q_{j,c} h_{m,n}\rho_{n,c}\eta_{n,c} \sqrt{p^f_{j,c}}s_{n,c}+w_{m,c},
\end{align}
and the SINR of user $m'$ at user $m$ is given by
\begin{align}\label{sinrlema1mmprinscmasinglecell}
& \gamma_{m',c}(m)=\\&\nonumber \dfrac{q_{m,c}p_{m',c}\sum_{n\in \mathcal{N}}\eta_{n,c}\rho_{n,c}|h_{m,n}|^2}{\sum_{i\in \mathcal{M},\hat{h}_{i,c}\ge\hat{h}_{m',c} }q_{i,c}p_{i,c}\sum_{n\in\mathcal{N}}\eta_{n,c}\rho_{n,c}|h_{m,n}|^2+|w_{m,c}|^2}
\end{align}
Note that the SINR of user $m'$ at its receiver is given by:
\begin{align}\label{sinrlema1mprinmprinscmasinglecell}
& \gamma_{m',c}(m')=\\&\nonumber \dfrac{q_{m',c}p_{m',c}\sum_{n\in \mathcal{N}}\eta_{n,c}\rho_{n,c}|h_{m',n}|^2}{\sum_{i\in \mathcal{M},\hat{h}_{i,c}\ge\hat{h}_{m',c} }q_{i,c}p_{i,c}\sum_{n\in\mathcal{N}}\eta_{n,c}\rho_{n,c}|h_{m',n}|^2+|w_{m,c}|^2}.
\end{align}
We emphasize that for user $m$ to correctly decode the signal of user $m'$, we must have $ \gamma_{m',c}(m)\geq  \gamma_{m',c}(m')$. For various power allocation strategies for the codebooks, ensuring $\hat{h}_{m,c}\geq\hat{h}_{m',c}$ does not guarantee the requirement $ \gamma_{m',c}(m)\geq  \gamma_{m',c}(m)$. Therefore, in some cases, we must explicitly declare this condition as an optimization constraint.

Subtracting the detected signals of all users worse than user $m$ from the received signal, 
 the corresponding SINR is given by:
 \begin{align}\label{snr}
& \gamma_{m,c}=\\&\nonumber \dfrac{q_{m,c}p_{m,c}\sum_{n\in \mathcal{N}}\eta_{n,c}\rho_{n,c}|h_{m,n}|^2}{\sum_{i\in \mathcal{M},\hat{h}_{i,c}\ge\hat{h}_{m,c} }q_{i,c}P_{i,c}\sum_{n\in\mathcal{N}}\eta_{n,c}\rho_{n,c}|h_{m,n}|^2+|w_{m,c}|^2}.\\&\nonumber
\end{align}

For multicell scenario, the same procedure is applied. However, the signals received from other cells are treated as noise. After detecting the signals of all users worse than user $m$, the available signal for user $m$ is given by:
\begin{align}\label{PSMAsignalreceivedcell0}
&Y_{m,c}^f=\sum_{k\in\mathcal{N}} h^f_{m,k}\rho_{n,c}^f\eta^f_{k,c} \sqrt{p^f_{m,c}}s^f_{m,k}\\\nonumber&+     \sum_{j\in \mathcal{M}_f,|\hat{h}^f_{j,c}|^2\ge|\hat{h}^f_{m,c}|^2}\sum_{k\in\mathcal{N}}q^f_{j,c} h^f_{m,k}\rho_{n,c}^f\eta^f_{k,c} \sqrt{p^f_{j,c}}s^f_{m,k} \\\nonumber&+\sum_{i\in\mathcal{F}, i\ne f}\sum_{j\in\mathcal{M}_f}\sum_{k\in\mathcal{N}}q^i_{j,c} h^i_{m,k}\rho_{k,c}^i\eta^i_{k,c} \sqrt{p^i_{j,c}}s^i_{m,k}+w_{m,c}^f,
\end{align}
 where the first term of \eqref{PSMAsignalreceivedcell0} represents the signal of user $m$ on codebook $c$ over BS $f$. The second term  shows the  interference that comes from non-orthogonality in power domain. The third term indicates inter-cell interference. The corresponding SINR is given by
 \begin{equation}\label{snrscm}
 \gamma^f_{m,c}=\dfrac{q^f_{n,c}\sum_{n\in \mathcal{N}}\rho_{n,c}^f\eta^f_{n,c}p^f_{m,c}|h^f_{m,n}|^2}{I^f_{m,c}(\text{Intercell})+I^f_{m,c}(\text{NOMA})+|w^f_{m,c}|^2},
\end{equation}
 where $I^f_{m,n}$  indicates the  intercell interference which is given by
\begin{align}
&I^f_{m,c}(\text{Intercell})=\sum_{f'\in\mathcal{F}/\{f\}}\sum_{m\in \mathcal{M}_{f'}}\sum_{n\in \mathcal{N}} q^{f'}_{m,c}p^{f'}_{m,c}\rho_{n,c}^{f'}\\\nonumber&\eta^{f'}_{n,c}|h^{f'}_{m,n}|^2,
\end{align}
and $I(\text{NOMA})$  shows the interference that comes from using the PD-NOMA technique and is given by
\begin{equation}
I^f_{m,c}(\text{NOMA})=\sum_{i\in \mathcal{M}_f,\hat{h}^f_{i,c}\ge\hat{h}^f_{m,c}}\sum_{n\in\mathcal{N}}q^f_{i,c}p^f_{i,c}\rho_{n,c}^f\eta^f_{n,c}|h^f_{m,n}|^2.
\end{equation}

Assuming that user $m$ at BS $f$ on codebook $c$  should detect and remove  the signal of $L_T-1$ users and defining the average channel gains set of superimposed users on codebook $c$  as $\hat{\mathcal{H}}_c$, the detection steps  are explained in Algorithm \eqref{table-20}.
\begin{algorithm}
\caption{PSMA based receiver }
\label{table-20}
I:  Set $b=1$ as iteration number,\\
II: $\hat{H}^f_{b,c}=\min{\hat{\mathcal{H}}_c}$
\\
III: Apply MPA on $Y(b)$,\\
 \hspace{1cm}Output: $\sum_{k\in\mathcal{N}}\eta^f_{k,c}s^f_{b,k}$ (signal of user $b$ on codebook $c$ over BS $f$)\\
V: Apply SIC:\\
$Y(b) = Y(b)-\sum_{k\in\mathcal{N}} p_{b,c}^f\eta^f_{k,c}s^f_{b,k}$,\\
 VI: If $b=L_T$\\
 \hspace{1cm} Apply MPA on $Y(b)$ as  signal of user $m$,\\
 \hspace{1cm} else\\
 \hspace{1cm}set $b=b+1$,\\
 \hspace{1cm} set $\hat{\mathcal{H}}_c=\hat{\mathcal{H}}_c-\{\hat{H}^f_{b,c}\}$,\\
  \hspace{1cm}  go back to  II.
\end{algorithm}

\section{Receiver Complexity}\label{Receiver Complexity}
In this section, the complexity of PD-NOMA, SCMA, and PSMA are investigated. The computational complexity of PD-NOMA and SCMA receiver  is investigated in \cite{our}. If we assume that in PSMA each codebook is assigned to $L_T$ users simultaneously, each user should be able to apply MPA $L_T$ times and SIC $L_T-1$ times to detect and decode the transmitted data (Fig. \ref{PSMA}). Therefore, if $G$ codebooks are assigned to a user,  the complexity order of  PSMA is approximately  given by
\begin{align}
\mathcal{O}\big((I_T(|\boldsymbol{\pi}|^d))(G)(L_T)\big),
\end{align}
where
$\mathbf{\pi}$ shows  the codebook set size, $I_T$ denotes the
total number of iterations, and $d$ represents the non-zero elements
in each row of the matrix $\bold{X}$ where
$\bold{X}=(\bold{x}_1,\dots,\bold{x}_n)$ is the factor graph matrix. 

The receiver complexity order for PD-NOMA, SCMA and PSMA are summarized in Fig. \ref{table-45}. By setting the system parameters as 
demonstrated in Fig. \ref{table-5}, we can see that the PSMA complexity increases around an order of magnitude with respect to SCMA. However, as will be shown in simulation results, this comes with significant improvement in system throughput.
In this table, $L_{T'}$ and $G'$ show the total number of users that can be assigned to each user and the total number of subcarriers  assigned to each user  in PD-NOMA, respectively.
\begin{figure*}[t]
                 \centering
                 \caption{Comparison between PD-NOMA, SCMA and PSMA receiver complexity }
                 \label{table-5}
                 \begin{tabular}{ |c|c|c|c|c|c|c|c|c|c|}
                 \hline

                     N & d&U &$L_{T'}$&$G'$& $L_T$&G& NOMA-complexity & SCMA-complexity& PSMA-complexity\\
                 \hline

                 8 & 3& 2&3&4&3 & 4&360&1536&12*1536\\

                 \hline

                  10& 4&3 &4&5&4&5&1920&40000&20*40000\\

                 \hline
                 \end{tabular}
\end{figure*}
\begin{figure*}[t]
	\centering
	\caption{Receiver complexity order for PD-NOMA, SCMA and PSMA }
	\label{table-45}
	\begin{tabular}{ |c|c|c|}
		\hline
		
		NOMA-complexity order & SCMA-complexity order& PSMA-complexity order\\
		\hline
		$\mathcal{O}(\big(2L_{T'}^3)+(2L_{T'}^2)(G')\big)(L_{T'}-1))$&$\mathcal{O}\big((I_T(|\boldsymbol{\pi}|^d))\big)$&$\mathcal{O}\big((I_T(|\boldsymbol{\pi}|^d))(G)(L_T)\big)$\\
		
		\hline
		
	\end{tabular}
\end{figure*}
\section{Resource Allocation Problems for PSMA based Cellular Networks}\label{Optimization Problems for PSMA based Cellular Networks}
In this section, we consider resource allocation problems for the proposed PSMA framework to assess the performance  compared to SCMA and PD-NOMA.
To get a better insight, we first consider the simple case of a single cell system with equal proportion of power for each subcarrier. Then the unequal
 proportion of power is considered. Finally, a comprehensive HetNet system is analyzed.

\subsection{PSMA-based Single Cell System with Equal Proportion of Power for Each Subcarrier}
Here, we assume that $\eta_{n,c}$ for each subcarrier in codebook $c$ has equal value. We use the the following Remark from \cite{lem1} in Chapter 6 to propose
Lemma 1:

\begin{rem}\label{lema1}
The interference cancellation (IC) of a candidate user (worse user) at user $m$ (better user) can be successfully performed if the (corresponding) SINR of the candidate user with less average channel gain, which is measured at user $m$, is more than its SINR on its receiver  (\cite{lem1} chapter 6). For more clarification, suppose $m$ and $m'$ are two users with $\hat{h}_{m,c}\ge \hat{h}_{m',c}$. To achieve successful IC at user $m$, we should have $\gamma_{m',c}(m)\ge q_{m,c}\gamma_{m',c}(m')$ where $\gamma_{m',c}(m)$ is the SINR of user $m'$ at user $m$ on codebook $c$ and $\gamma_{m',c}(m)$ and $\gamma_{m',c}(m')$ are, respectively, given by \eqref{sinrlema1mmprinscmasinglecell} and \eqref{sinrlema1mprinmprinscmasinglecell}.
\end{rem}

\begin{theorem}
With uniform  power allocation, ($\eta_{n,c}=1/U, \forall n\in \mathcal{N}_c$), the SIC constraint of Remark 1 is always satisfied.
\end{theorem}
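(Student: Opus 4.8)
The plan is to observe that, once $\eta_{n,c}=1/U$, the quantity playing the role of the ``effective channel gain'' in every SINR formula is exactly the sorting statistic $\hat{h}_{m,c}$ of \eqref{hhatpsmareceiverside}, after which the claim follows from the monotonicity of a single scalar function.

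First I would substitute $\eta_{n,c}=1/U$ into \eqref{sinrlema1mmprinscmasinglecell} and \eqref{sinrlema1mprinmprinscmasinglecell}, using the identity $\sum_{n\in\mathcal{N}}\rho_{n,c}=U$ recorded just after \eqref{hhatpsmareceiverside}. This collapses every weighted channel sum to $\sum_{n\in\mathcal{N}}\eta_{n,c}\rho_{n,c}|h_{m,n}|^2=\tfrac{1}{U}\sum_{n\in\mathcal{N}}\rho_{n,c}|h_{m,n}|^2=\hat{h}_{m,c}$, and likewise with $m$ replaced by $m'$ in the terms of $\gamma_{m',c}(m')$. Since both $m$ and $m'$ use codebook $c$ we have $q_{m,c}=q_{m',c}=1$; assuming a common per-codebook noise power $\sigma^2=|w_{m,c}|^2=|w_{m',c}|^2$ and abbreviating $S=\sum_{i\in\mathcal{M},\,\hat{h}_{i,c}\ge\hat{h}_{m',c}}q_{i,c}p_{i,c}$ for the aggregate interfering power (which is the same at both receivers, because the inner channel sum has been factored out of the summation over $i$), the two SINRs reduce to $\gamma_{m',c}(m)=p_{m',c}\hat{h}_{m,c}/(\hat{h}_{m,c}S+\sigma^2)$ and $\gamma_{m',c}(m')=p_{m',c}\hat{h}_{m',c}/(\hat{h}_{m',c}S+\sigma^2)$.

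The remaining step is purely algebraic. Both SINRs are $p_{m',c}\,g(\cdot)$ evaluated at $\hat{h}_{m,c}$ and $\hat{h}_{m',c}$, where $g(x)=x/(xS+\sigma^2)$; since $g'(x)=\sigma^2/(xS+\sigma^2)^2\ge 0$, the map $g$ is nondecreasing on $[0,\infty)$, so $\hat{h}_{m,c}\ge\hat{h}_{m',c}$ immediately gives $\gamma_{m',c}(m)\ge\gamma_{m',c}(m')=q_{m,c}\gamma_{m',c}(m')$, which is precisely the SIC condition of Remark \ref{lema1}. Equivalently one may cross-multiply the two fractions, cancel the common term $\hat{h}_{m,c}\hat{h}_{m',c}S$, and reduce the target inequality to $\hat{h}_{m,c}\sigma^2\ge\hat{h}_{m',c}\sigma^2$, i.e.\ to the hypothesis $\hat{h}_{m,c}\ge\hat{h}_{m',c}$ under which user $m$ is declared ``better'' than $m'$.

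The only point needing care is the factorization in the second paragraph: with $\eta_{n,c}=1/U$ each interference summation must genuinely split as (the receiver's channel gain) $\times$ (the common power sum $S$), so that both denominators take the form $\hat{h}\,S+\sigma^2$; and the equal per-codebook noise assumption must be stated explicitly, since without it the cancellation that exposes $\hat{h}_{m,c}\ge\hat{h}_{m',c}$ would not go through. Beyond that there is no real obstacle, and the same argument extends verbatim to the multicell SINR \eqref{snrscm} by absorbing the intercell interference term into $\sigma^2$.
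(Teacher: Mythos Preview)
Your argument is correct and is essentially the paper's: both substitute $\eta_{n,c}=1/U$, factor the receiver's aggregate channel gain out of the interference sum, and reduce the SIC inequality to $\hat h_{m,c}\ge\hat h_{m',c}$; the paper does this by dividing numerator and denominator by $\sum_{n}\rho_{n,c}|h_{\cdot,n}|^2$ so that the channel appears only in a normalized noise term $|w'_{m,c}|^2$, while you equivalently package the same step as monotonicity of $g(x)=x/(xS+\sigma^2)$. One caution on your closing remark: the multicell extension does \emph{not} go through verbatim, because the intercell interference $I^f_{m,c}(\text{Intercell})$ involves the cross-cell gains $|h^{f'}_{m,n}|^2$, which differ between receivers $m$ and $m'$ and therefore cannot be absorbed into a common $\sigma^2$---this is exactly why the paper imposes \eqref{eeq8h} as an explicit constraint in the HetNet formulation rather than appealing to the lemma.
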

\begin{proof}
For $\hat{h}_{m,c}>\hat{h}_{m',c}$,  we can write $\gamma_{m',c}(m)$ and $\gamma_{m',c}(m')$ as follows:

 \begin{align}\label{snr1}
& \gamma_{m',c}(m)=\\&\nonumber \dfrac{q_{m,c}p_{m',c}\sum_{n\in \mathcal{N}}\eta_{n,c}\rho_{n,c}|h_{m,n}|^2}{\sum_{i\in \mathcal{M},\hat{h}_{i,c}>\hat{h}_{m',c} }q_{i,c}p_{i,c}\sum_{n\in\mathcal{N}}\eta_{n,c}\rho_{n,c}|h_{m,n}|^2+|w_{m,c}|^2}\\&\nonumber =\dfrac{q_{m,c}1/U p_{m',c}\sum_{n\in \mathcal{N}}\rho_{n,c}|h_{m,n}|^2}{\sum_{i\in \mathcal{M},\hat{h}_{i,c}>\hat{h}_{m,c} }q_{i,c}1/U p_{i,c}\sum_{n\in\mathcal{N}}\rho_{n,c}|h_{m,n}|^2+|w_{m,c}|^2}\\&\nonumber =\dfrac{q_{m,c}1/U p_{m',c}}{\sum_{i\in \mathcal{M},\hat{h}_{i,c}>\hat{h}_{m,c} }q_{i,c}1/U p_{i,c}+|w'_{m,c}|^2},
\end{align}

 \begin{align}
& \gamma_{m',c}(m')=\\&\nonumber \dfrac{q_{m',c}p_{m',c}\sum_{n\in \mathcal{N}}\eta_{n,c}\rho_{n,c}|h_{m',n}|^2}{\sum_{i\in \mathcal{M},\hat{h}_{i,c}>\hat{h}_{m',c} }q_{i,c}p_{i,c}\sum_{n\in\mathcal{N}}\eta_{n,c}\rho_{n,c}|h_{m',n}|^2+|w_{m,c}|^2}\\&\nonumber =\dfrac{q_{m',c}1/U p_{m',c}\sum_{n\in \mathcal{N}}\rho_{n,c}|h_{m',n}|^2}{\sum_{i\in \mathcal{M},\hat{h}_{i,c}>\hat{h}_{m',c} }q_{i,c}1/U p_{i,c}\sum_{n\in\mathcal{N}}\rho_{n,c}|h_{m',n}|^2+|w_{m,c}|^2}\\&\nonumber =\dfrac{q_{m',c}1/U p_{m',c}}{\sum_{i\in \mathcal{M},\hat{h}_{i,c}>\hat{h}_{m',c} }1/U p_{i,c}q_{i,c}+|w''_{m,c}|^2},
\end{align}
where $|w'_{m,c}|^2$ and $|w''_{m,c}|^2$  are, respectively, given by

$$|w'_{m,c}|^2=\dfrac{|w_{m,c}|^2}{\sum_{n\in \mathcal{N}}\rho_{n,c}|h_{m,n}|^2},$$
and
$$|w''_{m,c}|^2=\dfrac{|w_{m,c}|^2}{\sum_{n\in \mathcal{N}}\rho_{n,c}|h_{m',n}|^2}.$$
As can be seen from \eqref{hhatpsmareceiverside} and $\hat{h}_{m,c}\ge\hat{h}_{m',c}$,  we can conclude that
 $\gamma_{m',c}(m)\ge q_{m,c}\gamma_{m',c}(m')$.
\end{proof}

To formulate the optimization problem, we note that the rate of user $m$ on codebook $c$  is given by
 \begin{equation}
 r_{m,c}(\mathbf{P},\mathbf{Q})=\log(1+\gamma_{m,c}),
\end{equation}
where $\gamma_{m,c}$  indicates the SINR of user $m$ on codebook $c$ which is given by \eqref{snr}, $\boldsymbol{Q}'=\big[q_{m,c}\big]\,\, \forall m\in \mathcal{M}, c\in \mathcal{C}$ and  $\boldsymbol{P}'=\big[p_{m,c}\big]\,\, \forall m\in \mathcal{M}, c\in \mathcal{C}$.

The problem formulation of this system model is formulated as follows:
\begin{subequations}\label{orj_p_sc1}
\begin{align}\label{eeq8a1}
&\max_{\mathbf{Q}',\mathbf{P}'}\; \sum_{m\in \mathcal{M}}\sum_{c\in \mathcal{C}}r_{m,c}(\mathbf{P}',\mathbf{Q}'),\\& \label{eeq8b1}
 \text{s.t.}:\hspace{.25cm}
\sum_{m\in \mathcal{M}}\sum_{c\in \mathcal{C}}q_{m,c}p_{m,c}\le p_{\text{max} },\\&\label{eeq8c1}
\hspace{1cm}\sum_{m\in \mathcal{M}}\sum_{m\in \mathcal{C}}q_{m,c}\rho_{n,c}\le K,\,\,\,\forall n\in\mathcal{N},\\&\label{eeq8f1}
\hspace{1cm}\sum_{m\in \mathcal{M}}q_{m,c}\le L_T,\,\,\,\forall c\in\mathcal{C},\\&\label{eeq8h1}
\hspace{1cm}p_{m,c}\ge 0,\,\,\, \forall m\in \mathcal{M} ,c\in\mathcal{C},\\&\label{eeq8e1}
\hspace{1cm}q_{m,c}\in
\begin{Bmatrix}
 0 ,
1
\end{Bmatrix},\,\,\forall m\in \mathcal{M} ,n\in\mathcal{N},
\end{align}
\end{subequations}
where \eqref{eeq8b1} indicates the maximum available transmit power in  BS, \eqref{eeq8c1} shows that each subcarrier can be reused at most $K$ times, and \eqref{eeq8f1} demonstrates that each codebook can be assigned to $L_T$ users  simultaneously.

\subsection{PSMA Based Single Cell System with  Unequal Proportion of Power  for Each Subcarrier}
If we consider unequal power proportion for different subcarriers in a codebook, to achieve successful IC, we cannot use the concept  used in the previous system model. Based on Remark \ref{lema1},  to achieve successful IC, in this system model,  the following constraint should be  applied:
\begin{equation}\label{gamaorder}
\gamma_{i,c}(m)\ge q_{m,c}\gamma_{i,c}(i) \,\,\,\, \forall i,m\in{\mathcal{M}},\hat{h}_{i,c}\ge \hat{h}_{m,c}.
\end{equation}

The resource allocation problem is formulated as:
\begin{subequations}\label{orj_p_sc11}
\begin{align}\label{eeq8a11}
&\max_{\mathbf{Q},\mathbf{P}}\; \sum_{m\in \mathcal{M}}\sum_{c\in \mathcal{C}}r_{m,c}(\mathbf{P},\mathbf{Q}),\\& \nonumber
 \text{s.t.}:\hspace{.25cm}
\eqref{eeq8b1}-\eqref{eeq8e1},\\&
\hspace{1cm}\gamma_{i,c}(m)\ge q_{m,c}\gamma_{i,c}(i) \,\,\forall i,m\in{\mathcal{M}},\hat{h}_{i,c}>\hat{h}_{m,c}.
\end{align}
\end{subequations}
\subsection{PSMA-based HetNet System Model}

Considering Remark \ref{lema1},  to achieve successful IC,  the following constraint is applied:

\begin{equation}\label{gamaorder1}
\gamma^f_{m,c}(j)\ge q^f_{j,c}\gamma^f_{m,c}(m) \,\,\,\, \forall j,m\in{\mathcal{M}_f}, f\in{\mathcal{F}},\hat{h}^f_{m,c}>\hat{h}^f_{j,c},
\end{equation}
where  $\gamma^f_{i,c}(m)$ shows the SINR of user $i$ at user $m$ in codebook $c$ over BS $f$.

The rate of user $m$ on codebook $c$ over BS $f$ is given by
 \begin{equation}
 r^f_{m,c}(\mathbf{P},\mathbf{Q})=\log(1+\gamma^f_{m,c}),
\end{equation}
where $\gamma^f_{m,c}$  indicates the SINR of user $m$ on codebook $c$ over BS $f$ which is given by \eqref{snrscm}, $\boldsymbol{Q}=\big[q^f_{m,c}\big]\,\, \forall m\in \mathcal{M}_f, f\in \mathcal{F}, c\in \mathcal{C}$ and  $\boldsymbol{P}=\big[p^f_{m,c}\big]\,\, \forall m\in \mathcal{M}_f, f\in \mathcal{F}, c\in \mathcal{C}$.

The proposed optimization problem formulation of the PSMA based HetNet system is written as:
\begin{subequations}\label{orj_p_sc}
\begin{align}\label{eeq8a}
&\max_{\mathbf{Q},\mathbf{P}}\; \sum_{f\in\mathcal{F}}\sum_{m\in \mathcal{M}_f}\sum_{c\in \mathcal{C}}r^f_{m,c}(\mathbf{P},\mathbf{Q}),\\& \label{eeq8b}
 \text{s.t.}:\hspace{.25cm}
\sum_{m\in \mathcal{M}_f}\sum_{c\in \mathcal{C}}q^f_{m,c}p^f_{m,c}\le p^f_{\text{max} }\,\,\,\forall f\in\mathcal{F},\\&\label{eeq8c}
\hspace{1cm}\sum_{m\in \mathcal{M}_f}\sum_{c\in \mathcal{C}}q^f_{m,c}\rho^f_{n,c}\le K,\,\,\,\forall n\in\mathcal{N},f\in\mathcal{F},\\&\label{eeq8f}
\hspace{1cm}\sum_{m\in \mathcal{M}_f}q^f_{m,c}\le L_T,\,\,\,\forall c\in\mathcal{C},f\in\mathcal{F},\\&\label{eeq8h}
\hspace{1cm}\gamma^f_{m,c}(j)\ge q^f_{j,c}\gamma^f_{m,c}(m), \,\forall j,m\in{\mathcal{M}_f}, f\in{\mathcal{F}},\\&\nonumber
\hspace{1cm}\hat{h}^f_{m,c}>\hat{h}^f_{j,c},\\&\label{eeq8ee}
\hspace{1cm}p^f_{m,c}\ge 0,\,\,\, \forall m\in \mathcal{M}_f ,c\in\mathcal{C},f\in\mathcal{F},\\&\label{eeq8e}
\hspace{1cm}q^f_{m,c}\in
\begin{Bmatrix}
 0 ,
1
\end{Bmatrix},\,\,\forall m\in \mathcal{M}_f ,n\in\mathcal{N},f\in\mathcal{F},
\end{align}
\end{subequations}
where \eqref{eeq8b} indicates the maximum available transmit power in each BS, \eqref{eeq8c} shows that each subcarrier can be reused at most $K$ times, and \eqref{eeq8f} demonstrates that each codebook can be assigned to $L_T$ users  simultaneously in each BS.
\begin{rem}
Constraint  \eqref{eeq8h} is a linear constraint in terms of power variables.
To show the linearity of constraint  \eqref{eeq8h}, we obtain equation \eqref{lineprof} shown at the top of the next page
\begin{figure*}[t]
\begin{align}\label{lineprof}
&\gamma^f_{m,c}(j)\ge q^f_{j,c}\gamma^f_{m,c}(m)\Rightarrow\\&\nonumber
\dfrac{q^f_{j,c}\sum_{n\in \mathcal{N}}\rho_{n,c}^f\eta^f_{n,c}p^f_{m,c}|h^f_{j,n}|^2}{\sum_{f'\in\mathcal{F}/\{f\}}\sum_{m\in \mathcal{M}_{f'}}\sum_{n\in \mathcal{N}} q^{f'}_{n,c}p^{f'}_{m,c}\rho_{n,c}^{f'}\eta^{f'}_{n,c}|h^{f'}_{j,n}|^2+\sum_{i\in \mathcal{M}_f,\hat{h}^f_{i,c}>\hat{h}^f_{m,c}}\sum_{n\in\mathcal{N}}q^f_{i,c}p^f_{i,c}\rho_{n,c}^f\eta^f_{n,c}|h^f_{j,n}|^2+|w^f_{j,c}|^2} \\&\nonumber
\dfrac{q^f_{j,c}q^f_{m,c}\sum_{n\in \mathcal{N}}\eta^f_{n,c}p^f_{m,c}|h^f_{m,n}|^2}{\sum_{f'\in\mathcal{F}/\{f\}}\sum_{m\in \mathcal{M}_{f'}}\sum_{n\in \mathcal{N}} q^{f'}_{n,c}p^{f'}_{m,c}\rho_{n,c}^{f'}\eta^{f'}_{n,c}|h^{f'}_{m,n}|^2+\sum_{i\in \mathcal{M}_f,\hat{h}^f_{i,c}>\hat{h}^f_{m,c}}\sum_{n\in\mathcal{N}}q^f_{i,c}p^f_{i,c}\rho_{n,c}^f\eta^f_{n,c}|h^f_{m,n}|^2+|w^f_{m,c}|^2},
\end{align}
\hrule
\end{figure*}
and after simplifying, the linear constraint  is obtained by \eqref{rem1} shown at the top of the next page.
\begin{figure*}[t]
\begin{align}\label{rem1}
&-(q^f_{j,c}q^f_{m,c}\sum_{n\in \mathcal{N}}\eta^f_{n,c}|h^f_{m,n}|^2)(\sum_{f'\in\mathcal{F}/\{f\}}\sum_{m\in \mathcal{M}_{f'}}\sum_{n\in \mathcal{N}} q^{f'}_{n,c}p^{f'}_{m,c}\rho_{n,c}^{f'}\eta^{f'}_{n,c}|h^{f'}_{j,n}|^2+\sum_{i\in \mathcal{M}_f,\hat{h}^f_{i,c}>\hat{h}^f_{m,c}}\sum_{n\in\mathcal{N}}q^f_{i,c}p^f_{i,c}\rho_{n,c}^f\eta^f_{n,c}|h^f_{j,n}|^2+|w^f_{j,c}|^2) \\&\nonumber+(q^f_{m,c}\sum_{n\in \mathcal{N}}\rho_{n,c}^f\eta^f_{n,c}|h^f_{j,n}|^2)(\sum_{f'\in\mathcal{F}/\{f\}}\sum_{m\in \mathcal{M}_{f'}}\sum_{n\in \mathcal{N}} q^{f'}_{n,c}p^{f'}_{m,c}\rho_{n,c}^{f'}\eta^{f'}_{n,c}|h^{f'}_{m,n}|^2+\sum_{i\in \mathcal{M}_f,\hat{h}^f_{i,c}>\hat{h}^f_{m,c}}\sum_{n\in\mathcal{N}}q^f_{i,c}p^f_{i,c}\rho_{n,c}^f\eta^f_{n,c}|h^f_{m,n}|^2+|w^f_{m,c}|^2)<0.
\end{align}
\hrule
\end{figure*}

\end{rem}
It should be mentioned that, the problem formulation and solution methods of SCMA and PD-NOMA based systems are investigated in \cite{our}.

\section{ solution algorithm of proposed resource allocation problems}\label{solution of the joint power and subcarrier allocation}
We propose the solution for the problem corresponding to the PSMA-based HetNet which is the most general one. The solution for the 
the first two problems can be considered as  special cases.
\subsection{Solution Algorithm}
Problem \eqref{orj_p_sc} is non-convex and  contains both integer and continuous variables. Therefore, the available methods to solve convex problems cannot be applied directly.
To solve the proposed problem, an iterative algorithm based on the SCA method is applied and power and codebook are assigned  separately in each iteration. The power allocation problem is non-convex, and the codebook allocation problem is integer non-linear programing (INLP). To solve the power allocation  problem first, successive convex approximation with low complexity (SCALE) \cite{SCALE} and difference of two concave functions (DC) \cite{DCDC} is applied to approximate the problem by a convex one, then the dual method is used.  To solve the codebook assignment problem, the mesh adaptive direct search (MADS) \cite{bib36} algorithm is exploited. An overview of the iterative algorithm is shown in Algorithm \ref{table-1}.
\begin{algorithm}
\caption{Overview of the solution algorithm }
\label{table-1}

I: Initialize  $\mathbf{Q}(0)$, $\mathbf{P}(0)$ and set $t=0$ (iteration number).
\\
II: Repeat:
\\
III: Set  $\mathbf{Q}=\mathbf{Q}(t)$ and find a solution for problem \eqref{orj_p_sc} by applying the SCA approach and
assign it to $\mathbf{P}(t+1)$,\\
IV: Find $\mathbf{Q}(t+1)$ by solving \eqref{orj_p_sc}
with  $\mathbf{P}=\mathbf{P}(t+1)$,\\
V: When $\|\mathbf{P}(t)-\mathbf{P}(t-1)\|\le \Upsilon$
stop. \\Otherwise,\\
 set $t=t+1$ and go back to III.\\
 Output:
 \\
 $\boldsymbol{\rho}(k)$ and $\mathbf{P}(k)$ are adopted for the considered system.
\end{algorithm}
\subsection{Power Allocation }
The power allocation problem is written as:
\begin{subequations}\label{orj_p_sc_p}
\begin{align}
&\max_{\mathbf{P}}\; \sum_{f\in\mathcal{F}}\sum_{m\in \mathcal{M}_f}\sum_{c\in \mathcal{C}}r^f_{m,c}(\mathbf{P}),\\& \nonumber
 \text{s.t.}:\hspace{.25cm}
\eqref{eeq8b},\eqref{eeq8h}.\\&\nonumber
\end{align}
\end{subequations}
In order to tackle the non-convexity issue of this problem, the following inequality (known as SCALE inequality) is applied  \cite{SCALE}:
\begin{equation}\label{eq11}
\xi \,\log(\mathcal{X})+\psi\le \,\log(1+\mathcal{X}),
\end{equation}
where
$$\xi=\dfrac{\mathcal{X}_0}{\mathcal{X}_0+1},~\psi=\log(1+\mathcal{X}_0)-\dfrac{\mathcal{X}_0}{\mathcal{X}_0+1}\,\log(\mathcal{X}_0).$$

By exploiting this inequality, the objective function of problem \eqref{orj_p_sc_p} is written by:
$$\sum_{f\in\mathcal{F}}\sum_{m\in \mathcal{M}_f}\sum_{c\in \mathcal{C}}\xi^f_{m,c}\log(\gamma^f_{m,c})+\eta^f_{m,c}.$$
The objective function is still non-convex. By transforming $p^f_{m,c}=\exp(\tilde{p}^f_{m,c})$, the convex form of the objective function is achieved. However, using the mentioned transformation, constraint \eqref{eeq8h} becomes a non-convex constraint. To tackle this issue, an approximation method such as DC \cite{DCDC} is applied to approximate \eqref{eeq8h} by a convex function. Therefore, based on DC, approximation of constraint  \eqref{eeq8h} is written as \eqref{DCrem1}.
\begin{figure*}[t]
\begin{subequations}
\begin{align}\label{DCrem1}
&-(\sum_{n\in \mathcal{N}}\eta^f_{n,c}|h^f_{m,n}|^2)(\sum_{f'\in\mathcal{F}/\{f\}}\sum_{m'\in \mathcal{M}_{f'}}\sum_{n\in \mathcal{N}} q^{f'}_{n,c}(\big(     {\exp{\tilde{p}_{m',c}^{f',t-1}}}+ {\exp{\tilde{p}_{m',c}^{f',t-1}}}(\tilde{p}_{m',c}^{f'}-\tilde{p}_{m',c}^{f',t-1})\big))\rho_{n,c}^{f'}\eta^{f'}_{n,c}|h^{f'}_{j,n}|^2\\&\nonumber+\sum_{i\in \mathcal{M}_f,|\hat{h}^f_{j,c}|^2>|\hat{h}^f_{m,c}|^2}\sum_{n\in\mathcal{N}}q^f_{i,c}\big(     {\exp{\tilde{p}_{i,c}^{f,t-1}}}+ {\exp{\tilde{p}_{i,c}^{f,t-1}}}(\tilde{p}_{i,c}^{f}-\tilde{p}_{i,c}^{f,t-1})\big)\rho_{n,c}^f\eta^f_{n,c}|h^f_{j,n}|^2+|w^f_{j,c}|^2)+(\sum_{n\in \mathcal{N}}\rho_{n,c}^f\eta^f_{n,c}|h^f_{j,n}|^2) \\&\nonumber(\sum_{f'\in\mathcal{F}/\{f\}}\sum_{m'\in \mathcal{M}_{f'}}\sum_{n\in \mathcal{N}} q^{f'}_{n,c}p^{f'}_{m',c}\rho_{n,c}^{f'}\eta^{f'}_{n,c}|h^{f'}_{m',n}|^2+\sum_{i\in \mathcal{M}_f,|\hat{h}^f_{j,c}|^2>|\hat{h}^f_{m,c}|^2}\sum_{n\in\mathcal{N}}q^f_{i,c}p^f_{i,c}\rho_{n,c}^f\eta^f_{n,c}|h^f_{m,n}|^2+|w^f_{m,c}|^2)<0.
\end{align}
\end{subequations}
\hrule
\end{figure*}

Consequently, a convex optimization problem in standard form  with variables $\tilde{\mathbf{P}}$ is achieved as follows:
\begin{align}\label{poweral}
&\max_{\tilde{\mathbf{P}}}\; \sum_{f\in\mathcal{F}}\sum_{m\in \mathcal{M}_f}\sum_{c\in \mathcal{C}}r^f_{m,c}(\tilde{p}^f_{m,c}),\\& \nonumber
 \text{s.t.}:\hspace{.25cm}
\sum_{m\in \mathcal{M}_f}\sum_{n\in \mathcal{N}}q^f_{m,c}\exp(\tilde{p}^f_{m,c})\le p^f_{\text{max} }\,\,\,\forall f\in\mathcal{F},\\&\nonumber
  \hspace{1cm}   \eqref{DCrem1}.
\end{align}
To show the concavity of the objective function, we rewrite it as follows:
\begin{align}\label{p.o.c}
&\sum_{f\in \mathcal{F}}\sum_{m\in \mathcal{M}_f} \sum_{c\in \mathcal{C}}\xi^f_{m,c}\bigg(\log(q^f_{n,c}\sum_{n\in \mathcal{N}}\rho^f_{n,c}\eta^f_{n,c}|h^f_{m,n}|^2)+\tilde{p}^f_{m,c}\\\nonumber& - \log\bigg(\sum_{f\in \mathcal{F}/\{f\}}\sum_{{m\in \mathcal{M}_f}} q^f_{n,c}\sum_{n\in \mathcal{N}}\rho^f_{n,c}\eta^f_{n,c}\exp(\tilde{p}^f_{m,c})|h^{f}_{m,n}|^2+\\\nonumber&\sum_{i\in \mathcal{M}_f,|\hat{h}^f_{i,c}|^2>|\hat{h}^f_{m,c}|^2}\sum_{n\in\mathcal{N}}q^f_{i,c}\exp(\tilde{p}^f_{i,c})\rho^f_{n,c}\eta^f_{n,c}|h^f_{m,n}|^2\\\nonumber&+|w^{f}_{m,c}|^2\bigg)\bigg)+\psi ^f_{m,c}.
\end{align}
Each term in \eqref{p.o.c} is concave and therefore, the new objective function is concave. We note that the log-sum-exp function is convex \cite{boyd}.

To deploy the SCALE algorithm in Algorithm \ref{table-1}, we use an iterative power allocation algorithm to find a power allocation not worse than $\bold{P}^t$ when $\boldsymbol{Q} = \boldsymbol{Q}^t$. This procedure is presented as in Algorithm \ref{table-2} where $z$  indicates the iteration number and $\bold{P}^{t,z}$ shows the power allocation in iteration $z$. In each iteration $t$, the power allocation problem is solved. Moreover,  $\boldsymbol{\xi}^z$ and $\boldsymbol{\psi}^z$ are updated as $\boldsymbol{\xi}^{z+1}$ and $\boldsymbol{\psi}^{z+1}$, respectively. The algorithm is initialized by $\boldsymbol{\xi}^{0}=1$ and $\boldsymbol{\psi}^{0}=0$, and continued until $\|\bold{P}^{t,z}-\bold{P}^{t,z-1}\|\le \epsilon$.

\begin{algorithm}
\caption{ALGORITHM TO IMPROVE THE SIC METHOD }
\label{table-2}
I:  Set $z=0$ and initialize  $\boldsymbol{\xi}^{0}=1$ and $\boldsymbol{\psi}^{0}=0$,\\
II: Repeat:
\\
III: Solve \eqref{poweral} then give the solution to $\bold{P}^{t,z}$, \\
IV:Update $\boldsymbol{\xi}^{z+1}$ and $\boldsymbol{\psi}^{z+1}$ with $\bold{P}^{t,z}$, \\
V: When $\|\bold{P}^{t,z}-\bold{P}^{t,z-1}\|\le \epsilon$
stop.\\ otherwise,\\
 set $z=z+1$ and go back to III.
\end{algorithm}

 To solve problem \eqref{p.o.c}, we use the dual method. Therefore, the corresponding Lagrangian function is given by \eqref{dual},
 \begin{figure*}[t]
\begin{align}\nonumber
& L(\mathbf{\tilde{p}},\boldsymbol{\delta},\boldsymbol{\beta})=\sum_{f\in \mathcal{F}}\sum_{m\in \mathcal{M}_f} \sum_{c\in \mathcal{C}}\xi^f_{m,c}\log(\gamma^f_{m,c}(\exp(\tilde{p}^f_{m,c})))+\psi ^f_{m,c}+ \sum_{f\in \mathcal{F}}\, \delta_f(p^f_{\text{max}} -\sum_{m\in \mathcal{M}_f} \sum_{n\in \mathcal{N}}\exp(\tilde{p}^f_{m,c}))\\\label{dual}&-\sum_{f1\in \mathcal{F}}\sum_{s\in \mathcal{M}_{f1}}\sum_{c \in\mathcal{C}}\sum_{j\in \mathcal{M}_f,|\hat{h}^f_{j,c}|^2>|\hat{h}^f_{s,c}|^2} \beta_{f1scj}\Big(-(\sum_{n\in \mathcal{N}}\rho^{f1}_{n,c}\eta^{f1}_{n,c}|h^{f1}_{m,n}|^2)(\sum_{f'\in\mathcal{F}/\{f\}}\sum_{m'\in \mathcal{M}_{f'}}\sum_{n\in \mathcal{N}} q^{f'}_{n,c}\\\nonumber&(\big({\exp{\tilde{p}_{m',c}^{f',t-1}}}+ {\exp{\tilde{p}_{m',c}^{f',t-1}}}(\tilde{p}_{m',c}^{f'}-\tilde{p}_{m',c}^{f',t-1})\big))\rho^{f'}_{n,c}\eta^{f'}_{n,c}|h^{f'}_{j,n}|^2+\sum_{i\in \mathcal{M}_f,|\hat{h}^f_{i,c}|^2>|\hat{h}^f_{s,c}|^2}\sum_{n\in\mathcal{N}}q^{f1}_{i,c}\big({\exp{\tilde{p}_{i,c}^{f1,t-1}}}+ {\exp{\tilde{p}_{i,c}^{f1,t-1}}}\\&\nonumber(\tilde{p}_{i,c}^{f1}-\tilde{p}_{i,c}^{f1,t-1})\big)\rho^{f1}_{n,c}\eta^{f1}_{n,c}|h^{f1}_{j,n}|^2+|w^{f1}_{j,c}|^2)+(\sum_{n\in \mathcal{N}}\rho^{f1}_{n,c}\eta^{f1}_{n,c}|h^{f1}_{j,n}|^2) (\sum_{f'\in\mathcal{F}/\{f\}}\sum_{m'\in \mathcal{M}_{f'}}\sum_{n\in \mathcal{N}} q^{f'}_{n,c}\exp{\tilde{p}}^{f'}_{m',c}\rho^{f'}_{n,c}\eta^{f'}_{n,c}|h^{f'}_{m',n}|^2+\\&\nonumber\sum_{i\in \mathcal{M}_f,|\hat{h}^f_{i,c}|^2>|\hat{h}^f_{s,c}|^2}\sum_{n\in\mathcal{N}}q^{f1}_{i,c}\exp{\tilde{p}}^{f1}_{i,c}\rho^{f1}_{n,c}\eta^{f1}_{n,c}|h^{f1}_{m,n}|^2+|w^{f1}_{m,c}|^2)\Big),
\end{align}
\hrule 
\end{figure*}
where
$\boldsymbol{\delta}$ and $\boldsymbol{\beta}$ are the   Lagrange multipliers.
The dual objective function is given by
\begin{equation}\label{dfrf}
g(\boldsymbol{\delta})=\max_{\mathbf{\tilde{p}}}L(\mathbf{\tilde{p}},\boldsymbol{\delta},\boldsymbol{\beta}).
\end{equation}
To solve the dual  problem, we should find the stationary point of \eqref{dual}
 with respect to
$\mathbf{\tilde{p}}$ where
$\boldsymbol{\delta}$ and $\boldsymbol{\beta}$
are fixed. Therefore, we have:
\begin{align}\label{eq000}
& \dfrac{\partial(L(\mathbf{\tilde{p}},\boldsymbol{\delta},\boldsymbol{\beta})}{\partial\tilde{p}^f_{m,c}}=0.
\end{align}
By simplifying \eqref{eq000}, $p^f_{m,c}$ is given by:
\begin{align}\label{mn}
p^f_{m,c}=\Bigg[\dfrac{ \xi^f_{m,c}+G^f_{m,c}}{\delta_f+A^f_{m,c}+B^f_{m,c}+C^f_{m,c}}\Bigg]^{+},
\end{align}
 where $[.]^+=\max(.,0)$ and
 \begin{align}\nonumber
 A^f_{m,c}=\sum^{M_f}_{i=m+1}\xi^f_{i,c}\dfrac{\gamma^f_{i,c}(p^f_{i,c})}{p^f_{i,c}},
 \end{align}
 \begin{align}\nonumber
   B^f_{m,c}=\sum_{k\in \mathcal{F}/\{f\}}\sum_{{i\in \mathcal{M}_f}}\xi^k_{i,j}\dfrac{\sum_{{n\in \mathcal{N}}}\rho^f_{n,c}\eta^f_{n,c}|h^f_{m,n}|^2\gamma^k_{i,j}}{\sum_{{n\in \mathcal{N}}}\rho^k_{n,c}\eta^k_{i,n}|h^k_{i,n}|^2p^k_{i,j}},
   \end{align}
   \begin{align}\nonumber
   &C^f_{m,c}=\sum_{f1\in \mathcal{F}}\sum_{s\in \mathcal{M}_{f1}}\sum_{j\in \mathcal{M}_f,|\hat{h}^f_{j,c}|^2 >|\hat{h}^f_{s,c}|^2}\beta_{f1scj}(\sum_{n\in \mathcal{N}}\\&\nonumber\rho^{f1}_{n,c}\eta^{f1}_{n,c}|h^{f1}_{j,n}|^2)(
 \sum_{{n\in \mathcal{N}}}q^{f}_{n,c}\rho^f_{n,c}\eta^{f}_{n,c}|h^{f}_{m,n}|^2
 \sum_{f1\in \mathcal{F}}\sum_{s\in \mathcal{M}_{f1}}\\&\nonumber\sum_{j\in \mathcal{M}_{f1},|\hat{h}^{f1}_{j,c}|^2<|\hat{h}^{f1}_{s,c}|^2} \beta_{f1scj}(\sum_{n\in \mathcal{N}}\rho^{f1}_{n,c}\eta^{f1}_{n,c}|h^{f1}_{j,n}|^2)\\&\nonumber
 \sum_{{n\in \mathcal{N}}} q^f_{m,c}\rho^f_{n,c}\eta^f_{n,c}|h^f_{s,n}|^2
 ),
 \end{align}
  and
  \begin{align}\nonumber
  &G^f_{m,c}=\sum_{f1\in \mathcal{F}/{f}}\sum_{s\in \mathcal{M}_{f1}}\sum_{j\in \mathcal{M}_f,|\hat{h}^f_{j,c}|^2>|\hat{h}^f_{s,c}|^2} \beta_{f1scj}\big(
 (\sum_{n\in \mathcal{N}}\rho^{f1}_{n,c}\\&\nonumber\eta^{f1}_{n,c}|h^{f1}_{s,n}|^2)(\sum_{n\in \mathcal{N}} q^{f}_{n,c}\big(     {\exp{\tilde{p}_{m,c}^{f,t-1}}}\big)\rho^{f}_{n,c}\eta^{f}_{n,c}|h^{f}_{j,n}|^2+\\&\nonumber\sum_{f1\in \mathcal{F}/{f}}\sum_{s\in \mathcal{M}_{f1}}\sum_{j\in \mathcal{M}_{f1},|\hat{h}^{f1}_{j,c}|^2<|\hat{h}^{f1}_{s,c}|^2} \beta_{f1scj}(\sum_{n\in \mathcal{N}}\rho^{f1}_{n,c}\eta^{f1}_{n,c}\\&\nonumber|h^{f1}_{s,n}|^2)(\sum_{n\in\mathcal{N}}q^f_{m,c}\big(     {\exp{\tilde{p}_{m,c}^{f,t-1}}}\big)\rho^{f}_{n,c}\eta^{f}_{n,c}|h^{f}_{m,n}|^2))
 \big).
 \end{align}

To update the dual multipliers, the subgradiant method is applied as follows:
\begin{equation}\label{update1}
\delta_f^{u+1}=[\delta_f^{u}-\nu_1(p_{\text{max}}^f-\sum_{m\in \mathcal{M}_f} \sum_{n\in \mathcal{N}}q^f_{m,n}p^f_{m,n})]^+,
\end{equation}
and
\begin{equation}\label{update2}
\beta_{f1scj}^{u+1}=[\beta_{f1scj}^{u}-\nu_2(\eqref{DCrem1})]^+,
\end{equation}
where  $u$ indicates  the iteration number of updating dual multipliers, and $\nu_1$ and $\nu_2$  are  small updating step-size.

The final algorithm in the power allocation sub-problem  is shown in  Algorithm \ref{table-3}.
\begin{algorithm}
\caption{ALGORITHM TO FIND STATIONARY POINT }
\label{table-3}
I:  Set $u=0$ and initialize  $\boldsymbol{\delta}^0$ and $\boldsymbol{\beta}^0$,\\
II: Repeat:
\\
III: Compute $\mathbf{p}$ by applying \eqref{mn}, \\
IV: Update $\boldsymbol{\delta}$ and $\boldsymbol{\beta}$,
  by using \eqref{update1} and \eqref{update2}, respectively, \\
V: When $\|\bold{P}^{u}-\bold{P}^{u-1}\|\le \epsilon$
stop.\\ otherwise,\\
 set $u=u+1$ and go back to III.
\end{algorithm}

\subsection{Codebook Assignment}
The problem of codebook assignment is formulated as:

\begin{align}\label{c_orj_p_sc_p}
&\max_{\mathbf{Q}}\; \sum_{f\in\mathcal{F}}\sum_{m\in \mathcal{M}_f}\sum_{c\in \mathcal{C}}r^f_{m,c}(\mathbf{Q}),\\& \nonumber
 \text{s.t.}:\hspace{.25cm}
\eqref{eeq8b}-\eqref{eeq8h}, \eqref{eeq8e}.\\&\nonumber
\end{align}

Problem \eqref{c_orj_p_sc_p} is an  INLP,  which can be solved using MADS algorithm. To apply MADS algorithm,   available optimization software  such as NOMAD solver \cite{bib36}  can be used.

\section{simulation results}\label{simulation resuls}
In this section, the system sum rate for PSMA, SCMA, and PD-NOMA is evaluated under different numbers of users and  small cells. In the
numerical results,  the system parameters are set as follows: MBS radius is $1$ Km, the SBSs radius are  $20$ m,
$N = 8$, $\eta^f_{n,c}= 1/2\,\, \forall f,c,m$, $S=2$, $K=6$, $h^f_{m,n} = x^f_{m,n}(d^f_{m})^{\mu}$ where $\mu$ indicates the path
loss exponent and $\mu = -2$, $x^f_{m,n}$ indicates the Rayleigh fading,
and $d^f_{m}$ demonstrates the distance between user $m$ and BS
$f$. 

Fig. \ref{pic-2} shows  the system sum rate versus the total number of users for PSMA, PD-NOMA, and SCMA, where we assume that
$P^1_{\text{max}} =30$ Watts,
$P^f_{\text{max}} = 2$ Watts for $f\in\{2,\dots,F\}$, and for PD-NOMA and PSMA $L_T=3$ users. As can be seen, PSMA  significantly improves the system performance compared to the other NOMA approaches. 
For smaller number of users, PSMA provides about 30\% more throughput. This increases to 50\% for larger number of users.

Fig. \ref{pic-4} shows the system sum rate versus the total transmit power for PSMA, PD-NOMA, and SCMA, where we considered that the total number of users is $12$, and for PD-NOMA and PSMA $L_T=3$ users. Again PSMA outperforms SCMA and PD-NOMA by a great margin.

Finally in Fig. \ref{pic-3}, we investigated the effect of $L_T$, the total number of users that can be assigned to a codebook simultaneously, on PSMA performance. As can be seen, PSMA with $L_T=1$, exhibits a performance simalr to SCMA. As $L_T$ increases, the performance improves, roughly $25\%$ improvement per each unit increase in $L_T$. 
We, however, expect that this saturates at some point. Moreover, larger $L_T$ implies more comalcity $5\%$.

\begin{figure}
\centering
\includegraphics[width=.5\textwidth]{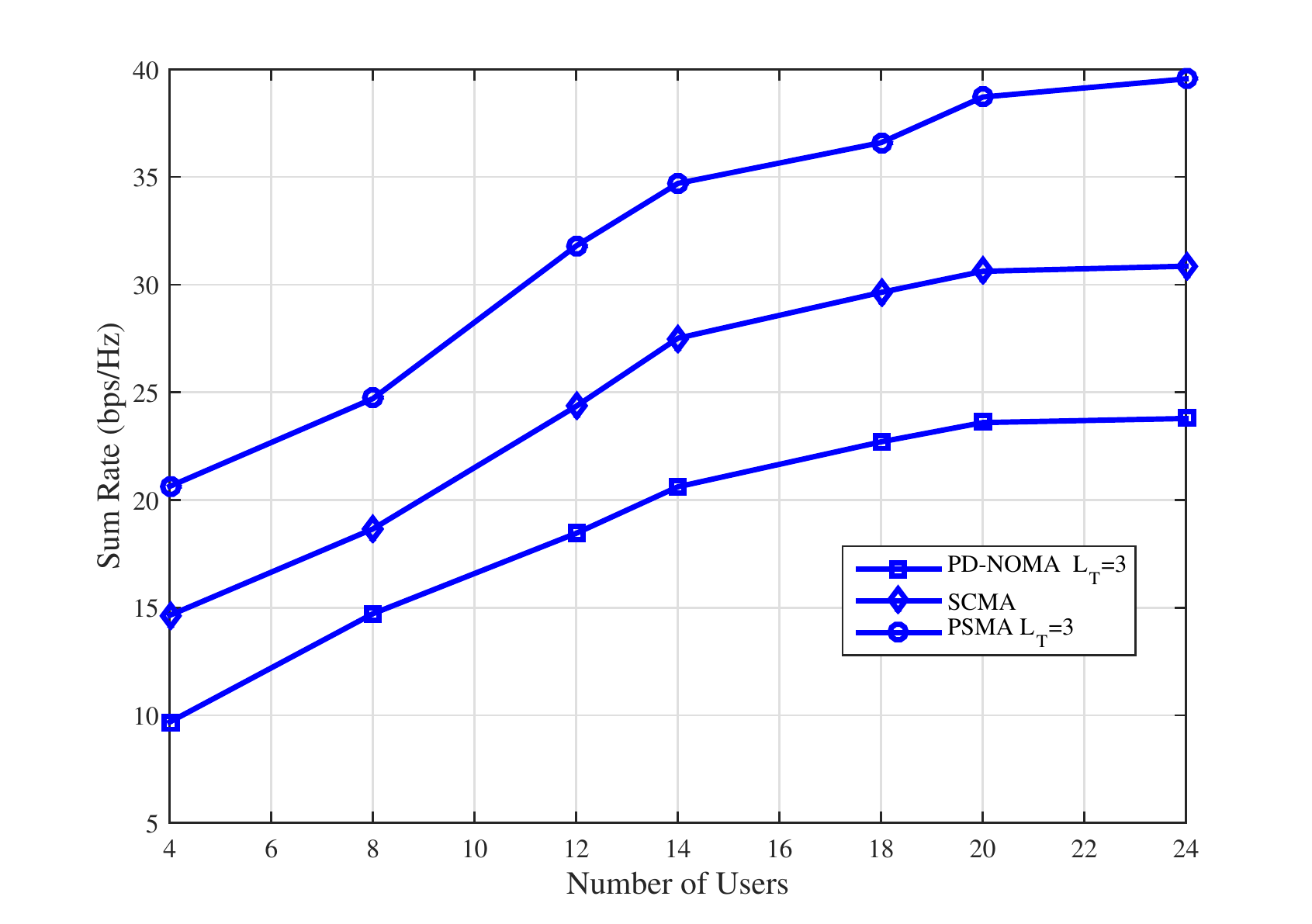}
\caption{System sum rate versus number of users. }
\label{pic-2}
\end{figure}
\begin{figure}
\centering
\includegraphics[width=.5\textwidth]{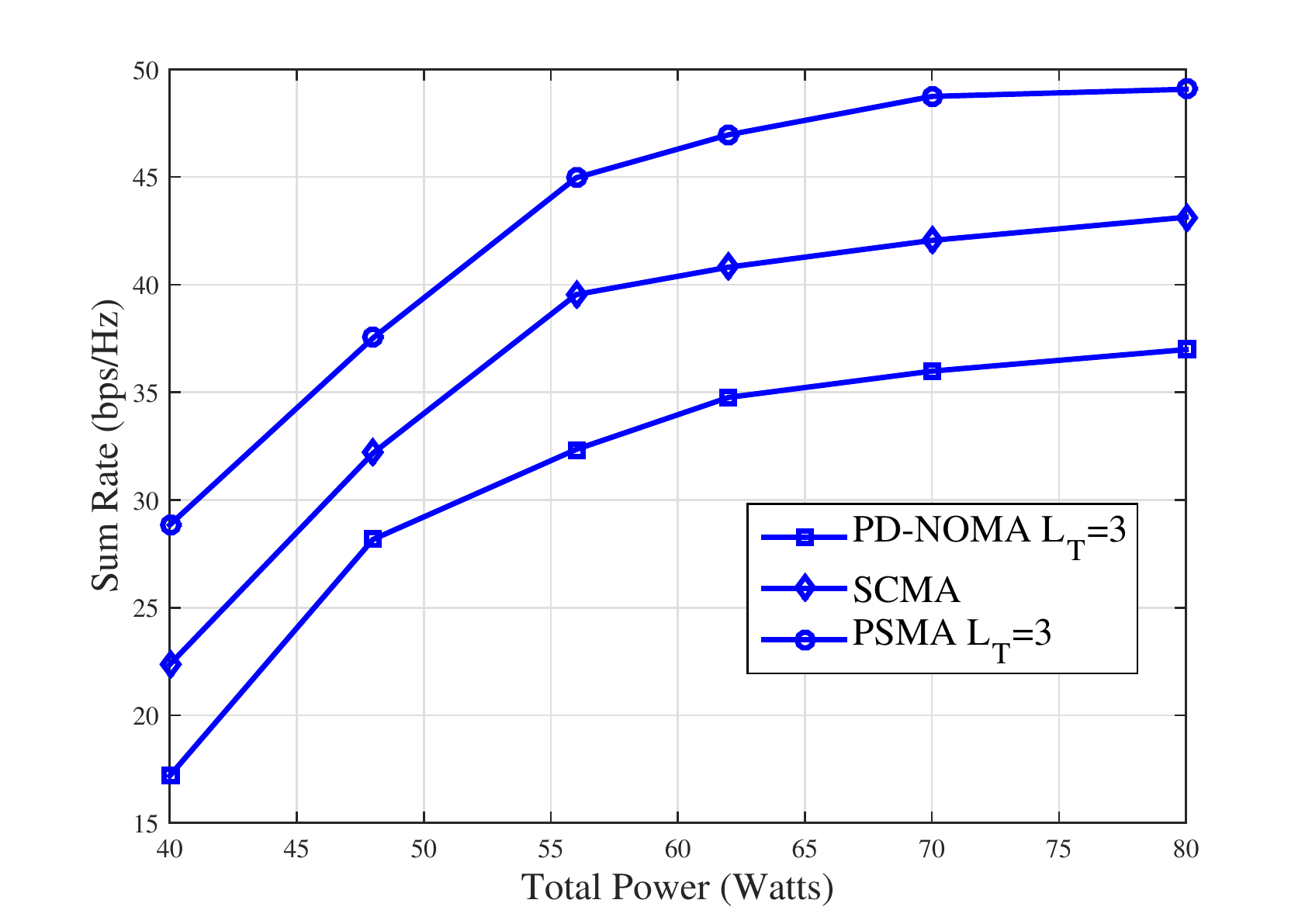}
\caption{System sum rate versus total transmit power. }
\label{pic-3}
\end{figure}

\begin{figure}
\centering
\includegraphics[width=.5\textwidth]{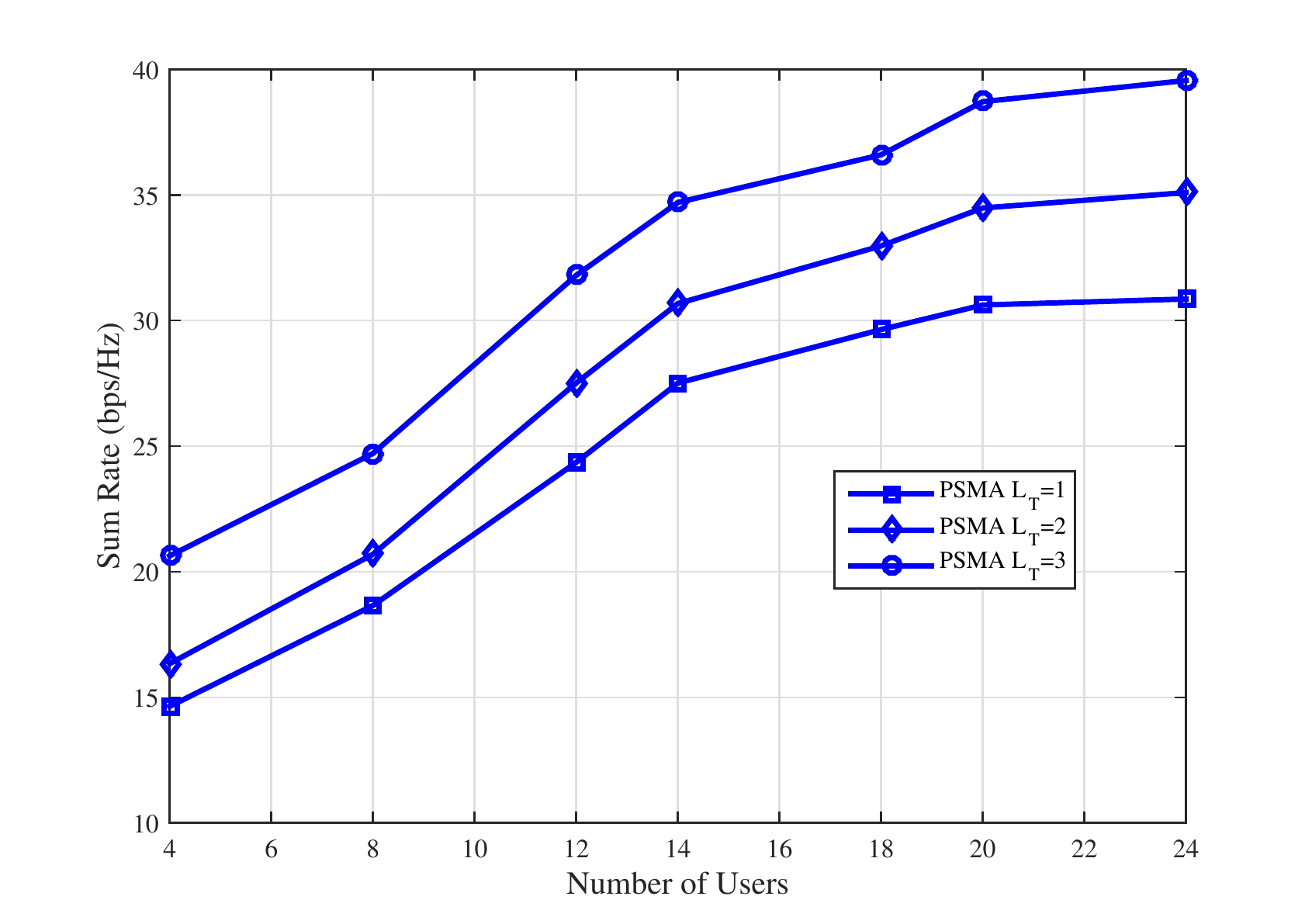}
\caption{ System sum rate versus $L_T$ (total number of users that can be assign to a codebook. )}
\label{pic-4}
\end{figure}

\section{conclusion}\label{CONCLUSION}
In this paper, we proposed a new MA technique for 5G which uses code and power domain to send multiple users' signals in a subcarrier. We investigated the PSMA transmitter and receiver  and compared it to other NOMA approaches from the aspect of receiver complexity and system performance. To this end, we proposed a novel resource allocation problem. To solve the proposed problem, we used an iterative algorithm based on the SCA approach where in each iteration, codebook assignment was solved by applying the MADS algorithm and power allocation was solved based on the SCALE and DC methods. Moreover, from simulation results, we concluded that the PSMA technique significantly outperforms other NOMA techniques while imposing a reasonable increase in complexity to the system. Future works include improving the robustness of PSMA decoders by taking into account factors such as CSI uncertainty, and developing more robust SIC ordering techniques. As a future work, we study  the link level performance of PSMA based systems and compare it to
PD-NOMA and SCMA based systems.

\end{document}